\title{SmartDCA superiority}
\author{Calvet, Emmanuel$^\circ$\thanks{emmanuel.calvet@usherbrooke.ca, Universit\'e de Sherbrooke}, \, Herranz-Celotti, Luca$^\circ$\thanks{luca.celotti@usherbrooke.ca, Universit\'e de Sherbrooke} \,  and Valimamode, Karim\thanks{karim\_valimamode@hotmail.com}}
\date{\today}
\begin{document}

\begin{abstract}
Dollar-Cost Averaging (DCA) is a widely used technique to mitigate volatility in long-term investments of appreciating assets. However, the inefficiency of DCA arises from fixing the investment amount regardless of market conditions. In this paper, we present a more efficient approach that we name SmartDCA, which consists in adjusting asset purchases based on price levels. The simplicity of \mbox{SmartDCA} allows for rigorous mathematical analysis, enabling us to establish its superiority through the application of Cauchy-Schwartz inequality and Lehmer means. We further extend our analysis to what we refer to as \mbox{$\rho$-SmartDCA}, where the invested amount is raised to the \mbox{power of $\rho$}. We demonstrate that higher values of $\rho$ lead to enhanced performance. However, this approach may result in unbounded investments. To address this concern, we introduce a bounded version of SmartDCA, 
taking advantage of two novel mean definitions that we name \mbox{quasi-Lehmer means}. 
The bounded SmartDCA is specifically designed to retain  its superiority to DCA. 
To support our claims, we provide rigorous mathematical proofs and conduct numerical analyses across various scenarios. The performance gain of different \mbox{SmartDCA} alternatives is compared against DCA using data from S\&P500 and \mbox{Bitcoin}. The results consistently demonstrate that all SmartDCA variations yield higher long-term investment returns compared to DCA. 
\end{abstract}

\maketitle

\def\thefootnote{$\circ$ \, }\footnotetext{These authors contributed equally to this work}

\section{Introduction}

Dollar Cost Averaging (DCA) is a common investment strategy, where an investor puts regularly a constant fraction of her wealth into the same asset to outperform the return that she would get by putting all her capital in an asset at once \cite{KIRKBY20201168, smith2018another}. Other more sophisticated investment strategies have been shown experimentally to outperform the DCA \cite{dunham2012building, kapalczynski2021effectiveness, payne2013l}. However, it is often hard to prove that these strategies can systematically outperform the DCA.

In this work, we provide mathematical proof for a simple investing strategy that can outperform the DCA in any market condition, which we call the SmartDCA. Essentially it consists in
regularly investing an amount of money that is inversely proportional to the current price of the asset. 
Similar approaches have been used before, without a rigorous mathematical justification \cite{dunham2012building, kapalczynski2021effectiveness, payne2013l}, and thus only claiming empirically their superiority, in simulation or on historical data. 
Instead, we show that it is possible to have investment strategies that are provably better than DCA, without any assumption on the market. For that, we had to introduce new definitions of means that are generalizations of the Lehmer mean, and for that reason, we call quasi-Lehmer means, in analogy with quasi-arithmetic means \cite{bullen2003handbook, bullen2013means, matkowski2020beckenbach}. We show on historical S\&P500 and Bitcoin data, that investing through the \mbox{SmartDCA} systematically improves the return on investment with respect to DCA.
\section{SmartDCA: using the price ratio} \label{dca}

We start by proving mathematically how regularly investing in an asset a quantity of money that is inversely proportional to the current price, results in a better cost per unit of the asset. For the sake of clarity, we first show how it is the case when only two trades are considered in Sec.~\ref{sec:2trades}, and then we prove it for any number of investment events in Sec.~\ref{sec:mbuys}.

\subsection{Mean cost of 2-buying times}
\label{sec:2trades}

% In this first section, we are going to solve the relatively trivial case when the number of times the investor buys $t \in \mathbf{N}$ is equal to two $t=2$. The objective is for pedagogical purposes, and to present the definitions of the DCA, and smart DCA. 
We study three scenarios of investment and compare them mathematically in order to prove what is the best strategy for investing in an arbitrary asset. Consider buying a good, for example gas, every month, anything that one would recurrently buy. The question we ask is whether it is more advantageous to buy a fixed quantity of gas (Regular Investing), than buying varying amounts of gas at a fixed cost (DCA). Next, we explore the intuition that when the price of gas is low, it is in our interest to buy more, and less when the price is high (SmartDCA). 
 
\subsubsection{First scenario: Regular Investing (RI)}
This scenario consists in buying at two times, $t_1$ and $t_2$, a quantity $q$ of an asset. At $t_1$ the price is $p_1$, and $p_2$ is the price at $t_2$. The total cost $c_{tot}$ that is spent is:
\begin{equation}
    c_{tot}=p_1q + p_2q   
\end{equation}
For example, let's say one bought half a litre of gas $q=0.5L$ at each time step, $p_1=0.5\$$, $p_2=1.5\$$ the total quantity is $q_{tot}=2q=1L$, for a total cost of $c_{tot}=2\$$. For the first scenario, the average cost per litre of gas $\mu_{RI}=c_{tot}/q_{tot}$ is:
\begin{align}
    \mu_{RI} = \frac{p_1q + p_2q}{2q} \\
    \quad = \frac{p_1 + p_2}{2} \label{eq:arithmetic}
\end{align}
In that case, the regular investing strategy turns out to give an average cost Eq.~(\ref{eq:arithmetic}) equal to the \textit{arithmetic mean} of the prices. With the gas example, the average price would be $\mu_{RI}=1\$/L$. 

\subsubsection{Second scenario: Dollar Cost Average (DCA)}
Now let's say that instead of buying a fixed amount of gas, one decided to always spend the same amount of money, at a fixed cost $c$: this is the Dollar Cost Average (DCA). At each time step, the quantity $q$ that is bought, is different. This quantity depends on the price $p$ of one unit of asset:
\begin{equation}
    q=c/p
\end{equation}
In the gas example, if the price of one litre is $2\$/L$, and one decides to buy for a cost $c=1\$$, then trivially, he bought a quantity $q=0.5L$. This time the total cost is:
\begin{equation}
  c_{tot}=c + c=2c  
\end{equation}
The total quanity $q_{tot}$ of the asset is:
\begin{equation}
    q_{tot}=c/p_1 + c/p_2
\end{equation}
So, in the end, for the second scenario, the price per unit of asset is simply:
\begin{align}
    \mu_{DCA}= \frac{2c}{c/p_1 + c/p_2} = \frac{2p_1p_2}{p_1 + p_2} \label{eq:harmonic}
\end{align} 
The average price in the case of the DCA is well known in statistics, as it is the \textit{harmonic mean}. This is interesting because Eq.~(\ref{eq:arithmetic}) and Eq.~(\ref{eq:harmonic}) are related in a well-known inequality, explaining why DCA is superior to RI:
\begin{align}
    \frac{p_1 + p_2}{2} \ge \frac{2p_1p_2}{p_1 + p_2}
\end{align}
\noindent which means that we pay less for the same amount of asset.

Following our example with gas, the average price for the DCA is $\mu_{DCA} = 0.75\$/L$, which is indeed inferior to $ \mu_{RI}=1\$/L$.

Note that when $p_1=p_2$ the two scenarios give the same price per unit, by replacing in Eq.~(\ref{eq:arithmetic}) and Eq.~(\ref{eq:harmonic}):
\begin{align}
    \mu_{RI} =& \frac{2p_1}{2}= p_1 \\
    \mu_{DCA} =& \frac{2{p_1}^2}{2p_1}=p_1
\end{align}
\subsubsection{Third scenario: SmartDCA}
Now we are going to explore a last scenario, where instead of buying at a fixed cost, irrespective of market conditions, we are going to optimize things by applying the following logic: if  the price is higher, then we want to buy less, and vice-versa.
We call this method the \mbox{SmartDCA}. First, at $t_1$, we buy for a base cost $c_1=c_b$. Next, at $t_2$ we buy for a cost that will depend on the price movement and the base price:
\begin{equation}
  c_2=c_b\frac{p_1}{p_2}  
\end{equation}
Now the total quantity $q_{tot}$ of the asset to buy is:
\begin{equation}
    q_{tot}=\frac{c_b}{p_1} + \frac{c_b}{p_2}\frac{p_1}{p_2}
\end{equation}
Which, with some arithmetic, gives:
\begin{equation}
    q_{tot}=c_b \left( \frac{{p_1}^2 + {p_2}^2}{p_1{p_2}^2} \right)    
\end{equation}
The total cost of these transactions is:
\begin{align}
    c_{tot}=c_b + c_b\frac{p_1}{p_2}    
\end{align}
\noindent  that results in the price per unit $\mu_{SmartDCA}=c_{tot}/q_{tot}$:
\begin{equation}
    \mu_{SmartDCA}=\frac{c_b + c_b\frac{p_1}{p_2}}{c_b \left( \frac{{p_1}^2 + {p_2}^2}{p_1{p_2}^2}\right)}
\end{equation}
We rewrite to obtain the final form:
\begin{equation}
    \mu_{SmartDCA} = p_1p_2\frac{p_1+p_2}{{p_1}^2 + {p_2}^2} \label{eq:adaptive}
\end{equation}
% /!\ adaptive mean: double check the name, I called it this way initially but is there an already existing name ? /!\
This mean is inversely proportional to the \textit{contraharmonic mean}.
% call this quantity the \textit{adaptive mean} since it is obtained in the context of adjusting the buy cost to the market movement. 
With our previous gas prices, we have a price per unit of $\mu_{SmartDCA} =0.6\$/L$, which is inferior to $\mu_{DCA}$.
Note that when $p_1=p_2$ all three scenarios still give the same price per unit, using Eq.~(\ref{eq:adaptive}):
\begin{align}
    \mu_{SmartDCA} = \frac{{p_1}^3+{p_2}^3}{{p_1}^2 + {p_1}^2}\\
    = \frac{2{p_1}^3}{2{p_1}^2}= p_1 \\
    \mu_{SmartDCA} = \mu_{DCA} = \mu_{RI} = p_1
\end{align}

\subsection{Suppremacy of SmartDCA  for \mbox{2-buys}}

To actually provide a real proof, we need to show that the inverse of the \textit{contraharmonic mean} (SmartDCA), is inferior or equal to the \textit{harmonic mean} (DCA). For ease of notation, we will consider $x=p_1$ and $y=p_2$ in the following proof. So we need to solve the following inequality:
\begin{equation}
    2\frac{xy}{x+y} \ge \frac{x^2y+y^2x}{x^2+y^2}    
\end{equation}
\noindent and with some arithmetics we obtain:
\begin{align}
    \frac{2}{x+y}\cancel{xy} \ge & \ \cancel{xy}\frac{x+y}{x^2+y^2}\\    
    2\frac{x^2+y^2}{(x+y)^2} -1 \ge & \ 0\\
    2(x^2+y^2)-(x+y)^2 \ge & \ 0
\end{align}
Simplifying:
\begin{align}
    x^2 + y^2 - 2xy \ge & \  0   \\
    (x-y)^2 \ge & \ 0
\end{align}
\noindent which is a well-known polynomial identity, and trivially, a square cannot be negative, so this is true for all $x$ and $y$. It means that the difference between the DCA and the SmartDCA scales as the square of the difference between $p_1$ and $p_2$, and as seen previously, when $p_1=p_2$, they are equal. 

\subsection{Supremacy for m-buys}
\label{sec:mbuys}

In the most general form, we want to see if after $m$ buys performed by the investor using the SmartDCA, the cost per unit of asset is better than using  DCA. Essentially the quantity to invest according to the SmartDCA has to be inversely proportional to the price, but to make it unitless we will multiply it by a reference price of our choice $p_r$, and therefore the vanilla SmartDCA suggests investing $p_r/p_i$ \mbox{at time $i$}. We prove in App.~\ref{app:smd} that:

\begin{restatable}[SmartDCA superiority over DCA]{theorem}{sdca}
\label{thm:sdca} Over $m$-buying events, investing through the SmartDCA results in better price per unit than investing through DCA.
\end{restatable}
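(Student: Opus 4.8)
The plan is to compute the general price per unit under each strategy and reduce the desired inequality to a known mean inequality. With $m$ buying events at prices $p_1,\dots,p_m$ and reference price $p_r$, the DCA strategy invests a fixed cost $c$ each time, buying $c/p_i$ units, so its price per unit is the harmonic mean $\mu_{DCA} = m\big/\sum_{i=1}^m p_i^{-1}$ (the factor $p_r$ cancels, so we may as well take $c=p_r$). The SmartDCA strategy invests $p_r/p_i$ at time $i$, buying $(p_r/p_i)/p_i = p_r/p_i^2$ units, for total cost $p_r\sum_i p_i^{-1}$ and total quantity $p_r\sum_i p_i^{-2}$, giving
\begin{equation}
\mu_{SmartDCA} = \frac{\sum_{i=1}^m p_i^{-1}}{\sum_{i=1}^m p_i^{-2}}.
\end{equation}
So the theorem amounts to showing $\mu_{DCA}\ge \mu_{SmartDCA}$, i.e.
\begin{equation}
\frac{m}{\sum_i p_i^{-1}} \;\ge\; \frac{\sum_i p_i^{-1}}{\sum_i p_i^{-2}}.
\end{equation}

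Next I would clear denominators. Since all $p_i>0$, both sums of reciprocals are positive, and the inequality is equivalent to
\begin{equation}
m\sum_{i=1}^m p_i^{-2} \;\ge\; \Big(\sum_{i=1}^m p_i^{-1}\Big)^{2}.
\end{equation}
This is exactly the Cauchy--Schwartz inequality applied to the vectors $(1,\dots,1)$ and $(p_1^{-1},\dots,p_m^{-1})$: $\big(\sum_i 1\cdot p_i^{-1}\big)^2 \le \big(\sum_i 1\big)\big(\sum_i p_i^{-2}\big)$. Equivalently, it is the power-mean / QM--AM inequality stating that the quadratic mean of the $p_i^{-1}$ dominates their arithmetic mean. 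Equality holds precisely when all $p_i^{-1}$ are equal, i.e. when $p_1=\dots=p_m$, recovering the degenerate case already noted in the two-buy analysis. One can also phrase the whole comparison in the language of Lehmer means: $\mu_{DCA}$ and $\mu_{SmartDCA}$ are reciprocals of the Lehmer means $L_0$ and $L_1$ of the $p_i^{-1}$ (equivalently $L_1$ and $L_0$ of the $p_i$ up to the inversion), and monotonicity of the Lehmer mean in its parameter gives the result directly; I would mention this to motivate the later $\rho$-SmartDCA generalization.

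The argument is essentially a one-line application of Cauchy--Schwartz once the price-per-unit formulas are in hand, so I do not anticipate a genuine obstacle. The only place requiring care is the bookkeeping that turns the investment rule "invest $p_r/p_i$ at step $i$" into the quantity $p_r/p_i^2$ and then into the ratio-of-sums form above, together with checking that the arbitrary constants $c$ and $p_r$ genuinely cancel so the conclusion is scale-free; I would write those steps out explicitly in the appendix. I would close by noting the equality condition and remarking that, as in the $2$-buy case, the gap is controlled by the dispersion of the prices (here $m\sum p_i^{-2} - (\sum p_i^{-1})^2 = \tfrac12\sum_{i,j}(p_i^{-1}-p_j^{-1})^2$), which quantifies how much SmartDCA beats DCA when prices are volatile.
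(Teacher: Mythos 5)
Your proposal is correct and follows essentially the same route as the paper's own proof in Appendix~\ref{app:smd}: compute $\mu_{DCA}=m/\sum_i p_i^{-1}$ and $\mu_{SmartDCA}=\sum_i p_i^{-1}/\sum_i p_i^{-2}$, clear denominators, and apply Cauchy--Schwartz to $(1,\dots,1)$ and $(p_1^{-1},\dots,p_m^{-1})$. Your added remarks on the equality condition and the Lehmer-mean reformulation are consistent with (and anticipate) the paper's treatment of the $\rho$-SmartDCA generalization.
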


% We decided to let the proof of this Theorem in the Appendix to ease readability.

\subsection{Generalization to $\rho$-SmartDCA}

If we want to be even more general, let's consider investing $(p_r/p_i)^\rho$ regularly at the $i$-th buying event, and let's call the resulting strategy the $\rho$-SmartDCA, which gives an average price per unit of the asset $\mu_\rho$. Again, $p_r$ is the price of reference and will be kept constant. The interest in using such exponent is that when the price is above the price of reference, it will result in even less investing, and when the price is inferior, it will exponentially increase. In the following, we will demonstrate that this strategy gives superior results. After the Theorem statement, we show the proof of superiority. We use it to introduce the concept of \textit{Lehmer mean} \cite{bullen2003handbook}, necessary for the even more general Thm.~\ref{thm:bsdca} that will follow.

\begin{restatable}[$\rho$-SmartDCA improves with higher $\rho$]{theorem}{usdca}
\label{thm:usdca}  Investing through the $\rho$-SmartDCA, higher $\rho$ results in better price per unit, over $m$ buying events.
\end{restatable}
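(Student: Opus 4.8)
The plan is to reduce the claim to the monotonicity of the Lehmer mean in its parameter. First I would record the average price per unit of the $\rho$-SmartDCA over prices $p_1,\dots,p_m$: the $i$-th purchase costs $c_i=(p_r/p_i)^\rho$ and buys a quantity $q_i=c_i/p_i=p_r^\rho\,p_i^{-(\rho+1)}$, so that
\begin{equation}
\mu_\rho=\frac{\sum_{i=1}^{m}c_i}{\sum_{i=1}^{m}q_i}=\frac{\sum_{i=1}^{m}p_i^{-\rho}}{\sum_{i=1}^{m}p_i^{-\rho-1}},
\end{equation}
the $p_r^\rho$ factors cancelling. Setting $q=-\rho$, this is precisely the Lehmer mean $L_q(p_1,\dots,p_m)=\big(\sum_i p_i^{q}\big)\big/\big(\sum_i p_i^{q-1}\big)$ of the prices, i.e.\ $\mu_\rho=L_{-\rho}(p_1,\dots,p_m)$. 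Hence it suffices to show $q\mapsto L_q$ is nondecreasing: increasing $\rho$ then decreases $-\rho$ and therefore decreases $\mu_\rho$, which is exactly ``better price per unit''. As consistency checks, $\rho=0$ gives constant investments (DCA) with $\mu_0=L_0$, the harmonic mean, while $\rho=1$ gives the vanilla SmartDCA with $\mu_1=L_{-1}$, so Thm.~\ref{thm:sdca} is the special case $L_{-1}\le L_0$.

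For the monotonicity, the key step is a log-convexity estimate for the power sums $S_q:=\sum_{i=1}^{m}p_i^{q}$. Splitting $p_i^{q}=p_i^{(q+1)/2}p_i^{(q-1)/2}$ and applying the Cauchy--Schwarz inequality gives
\begin{equation}
S_q^2=\Big(\sum_i p_i^{(q+1)/2}p_i^{(q-1)/2}\Big)^2\le\Big(\sum_i p_i^{q+1}\Big)\Big(\sum_i p_i^{q-1}\Big)=S_{q+1}\,S_{q-1},
\end{equation}
which rearranges to $S_q/S_{q-1}\le S_{q+1}/S_q$, that is $L_q\le L_{q+1}$. Repeating with $p_i^{q}=p_i^{(q+s)/2}p_i^{(q-s)/2}$ for all real $s$ shows $q\mapsto\log S_q$ is midpoint-convex, hence (being continuous) convex; since the increment of a convex function over a unit-length interval is nondecreasing as the interval moves right, $\log L_q=\log S_q-\log S_{q-1}$ is nondecreasing in $q$, now for arbitrary real arguments. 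Equality holds throughout exactly when all $p_i$ coincide, matching the earlier observation that the strategies agree when the price never moves.

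I expect the only delicate point to be the passage from the integer shift to arbitrary real $\rho_1<\rho_2$: Cauchy--Schwarz settles the integer case in one line, but the non-integer case needs either the convexity of $\log S_q$ above together with the one-line lemma that $q\mapsto f(q)-f(q-1)$ is nondecreasing for convex $f$, or instead a direct derivative computation,
\begin{equation}
\frac{d}{dq}\log L_q=\frac{\sum_i p_i^{q}\log p_i}{\sum_i p_i^{q}}-\frac{\sum_i p_i^{q-1}\log p_i}{\sum_i p_i^{q-1}}\ge 0,
\end{equation}
which is the statement that reweighting the average of the increasing quantity $\log p_i$ by the increasing factor $p_i$ cannot decrease it (a Chebyshev/covariance inequality). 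Either route is routine; I would present whichever dovetails best with the introduction of the Lehmer mean that is reused in Thm.~\ref{thm:bsdca}.
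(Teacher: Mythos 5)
Your proposal is correct, and the reduction is the same one the paper uses: identify the $\rho$-SmartDCA price per unit as a Lehmer mean and invoke monotonicity of that mean in its parameter. (Your $\mu_\rho=L_{-\rho}(\boldsymbol{p})$ in terms of the raw prices and the paper's $\mu_\rho=p_r/L_{\rho+1}(\boldsymbol{r})$ in terms of the ratios $r_i=p_r/p_i$ are the same statement, related by homogeneity and the reciprocal identity of Lehmer means.) The substantive difference is in how the key lemma is handled: the paper simply cites the monotonicity $\rho\leq\rho'\implies L_\rho\leq L_{\rho'}$ as a known fact, whereas you prove it, first for unit shifts via the Cauchy--Schwarz estimate $S_q^2\leq S_{q+1}S_{q-1}$ on the power sums, then for arbitrary real parameters via log-convexity of $q\mapsto S_q$ (or equivalently the covariance/Chebyshev-sum form of $\frac{d}{dq}\log L_q\geq0$). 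This buys two things: it makes the theorem self-contained, and it unifies the argument with the rest of the paper --- your one-line Cauchy--Schwarz step is exactly the mechanism behind the paper's separate proof of Theorem~\ref{thm:sdca} (the case $L_{-1}\leq L_0$), and your derivative computation is precisely the $f=\mathrm{id}$ specialization of the pairwise-difference identity the paper later derives for the quasi-Lehmer mean $L^{(out)}$ in Appendix~\ref{app:newmeans}. Your equality characterization (all $p_i$ equal) and the consistency checks at $\rho=0,1$ also match the paper's remarks. No gaps; either of your two routes for the non-integer case is sound.
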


\begin{proof}
We proceed as before, at each time step, we invest an amount proportional to a base cost $c_b$,  take the ratio of the reference price $p_r$ and the current price, and then raise it to the power of $\rho$. We therefore buy a total quantity $q$ for a total price $c$:
\begin{align}
    c &=c_b\left(\frac{p_r}{p_1}\right)^{\rho} + c_b\left(\frac{p_r}{p_2}\right)^{\rho}   + \cdots \notag \\ 
    & \quad \cdots + c_b\left(\frac{p_r}{p_m}\right)^{\rho} \\
    q &=\frac{c_b}{p_1}\left(\frac{p_r}{p_1}\right)^{\rho} + \frac{c_b}{p_2}\left(\frac{p_r}{p_2}\right)^{\rho}   + \cdots \notag \\ 
    & \quad \cdots + \frac{c_b}{p_m}\left(\frac{p_r}{p_m}\right)^{\rho}
\end{align}
Since $p_r$ is constant, we will use the ratio $r_i=p_r/p_i$ as our base of reference for the calculus:
\begin{align}
    q=\frac{c_b}{p_r}\Big(\sum_{i=1}^m{r_i^{\rho+1}}\Big)\\
    c =c_b\Big(\sum_{i=1}^m{r_i^{\rho}}\Big)
\end{align}
\noindent and we are interested in the mean price per unit of asset:
\begin{align}
     \mu_\rho=& \ \frac{c}{q} =  \ p_r\frac{\sum_{i=1}^m {r_i^{\rho}} }{\sum_{i=1}^m{r_i^{\rho+1}}}
\end{align}
Now, notice the similarity with the Lehmer mean \cite{bullen2003handbook}:
\begin{align}
    L_\rho(\boldsymbol{x})= \frac{\sum_{i=1}^m x_i^{\rho} }{\sum_{i=1}^m x_i^{\rho-1}}
\end{align}
We will make use of the fact that $\rho\leq\rho'\implies L_\rho(x)\leq L_{\rho'}(x)$ \cite{wikilehmer}.
Using  the notation $\boldsymbol{r} = p_r(\boldsymbol{1/p}) = p_r(1/p_1, 1/p_2, \cdots, 1/p_m)$,
% Using  the notation $1/\boldsymbol{r} = (1/p_1, 1/p_2, \cdots, 1/p_m)$,
% we can rewrite Eq.~\ref{eq:smart_dca}
 If  $\rho\leq\rho'$ we can write the mean cost of a unit of the asset as:
\begin{align}
    \mu_\rho
    =& \ p_r\frac{\sum_{i=1}^mr_i^{\rho} }{\sum_{i=1}^m r_i^{\rho+1}}\\
    =& \ \frac{p_r}{L_{\rho+1}(\boldsymbol{r})}\\
    \geq& \ \frac{p_r}{L_{\rho'+1}(\boldsymbol{r})}=\mu_{\rho'}
\end{align}
\noindent and therefore, the higher the $\rho$, the better the average price per unit. QED.

\end{proof}

Notice that $\rho=-1$ corresponds to the Regular Investing strategy and $\rho=0$ to the DCA. Therefore, both can be considered to belong to a more general family of investing strategies, the $\rho$-SmartDCA. As a consequence of Theorem~\ref{thm:usdca}, any $\rho>1$ will outperform the SmartDCA, any $\rho>0$ will outperform the DCA and  any $\rho>-1$ will outperform Regular Investing.  Moreover, if $\rho \rightarrow \infty$, $L_\infty(\boldsymbol{r})=\max\{\boldsymbol{r}\}$ \cite{Sykora2009}, thus $\mu_\infty=p_r/\max\{\boldsymbol{r}\}= \min\{\boldsymbol{p}\}$, and we have a lower bound for the best strategy price per unit.
Therefore, all these strategies are connected by their price per unit $\mu_\rho$, in the following inequality:
\begin{equation*}
\boxed{
\begin{array}{rcl}
\mu_{-1} \geq \mu_0 \geq \mu_{\rho\geq0} \geq  min\{\boldsymbol{p} \}
\end{array}
}
\end{equation*}

% This is significant because it means that by taking $\rho$ sufficiently big, one can time the market to find the lower price within the considered time frame, which is very difficult to do in practice. 
% Now, remark that by taking $\rho=-1$, we have:

% \begin{align}
%     \mu_{-1}=\frac{\sum_{i=1}^m p_i}{m}
%     \label{eq:harmonic_general}
% \end{align}

% Which is equal to the \textit{arithmetic mean}, and when we take $\rho=0$, we have:

% \begin{align}
%     \mu_0=\frac{m}{\sum_{i=1}^m\frac{1}{p_i}}
%     \label{eq:harmonic_general}
% \end{align}

% \noindent which is in fact, the definition for the \textit{harmonic mean}. Hence we find that the \textit{regular investing} and the \textit{dollar cost average}, 

\subsection{Generalization to $(f)\rho$-SmartDCA}

Now, the drawback of using the \mbox{$\rho$-SmartDCA} as above is that it will potentially ask you to invest more money than you have if the price is sufficiently low. Indeed, we have $(p_r/p_i)^\rho\rightarrow \infty$ as $p_i \rightarrow 0$, which happens exponentially faster with $\rho>1$. For that reason, we propose two modifications that can be defined to have a maximal investment amount. In the \textit{in} version, for every buying event, we invest $c_bf((p_r/p_i)^\rho)$,  and in the second \textit{out} version we invest $c_bf(p_r/p_i)^\rho$, where $in, out$ are just a reminder of the positioning of the power with respect to $f$.  Notice that the results in this section hold for any $f$ positive monotonic increasing, and for example, if $f$ is the identity, we recover the unbounded \mbox{$\rho$-SmartDCA} as a special case. However, we are interested in \mbox{$f$ bounded}, such as the function  $tanh$, since in that case the strategy could only ask the investor for a maximal investment of $c_b$. To be able to tackle this more general case, we introduce two new means that we call \textit{quasi-Lehmer means}, taking the form:
\begin{align}
    L^{(in)}_{\rho+1}(\boldsymbol{x})=& \ \frac{\sum_{i=1}^m x_if(x_i^\rho) }{\sum_{i=1}^m f(x_i^\rho)}\\
    L^{(out)}_{\rho+1}(\boldsymbol{x})=& \ \frac{\sum_{i=1}^m x_if(x_i)^\rho }{\sum_{i=1}^m f(x_i)^\rho}
\end{align}
\noindent where the naming choice is to draw the parallel with \textit{quasi-arithmetic means}.
In fact, we prove in App.~\ref{app:newmeans} that:

\begin{restatable}[quasi-Lehmer means monotonicity]{theorem}{qlm}
\label{thm:qlm} If $\rho\leq \rho'$ and $f$ is positive and monotonic increasing then $L^{(out)}_\rho(\boldsymbol{x})\leq L^{(out)}_{\rho'}(\boldsymbol{x})$, and therefore $L^{(out)}_\rho(\boldsymbol{x})$ is monotonic increasing with $\rho$. However, $L^{(in)}_\rho(\boldsymbol{x})$ is not in general monotonic \mbox{increasing with $\rho$}.
\end{restatable}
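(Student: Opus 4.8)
The plan is to prove the two halves of the statement by different means: the positive assertion about $L^{(out)}$ by a Chebyshev-type rearrangement, and the negative assertion about $L^{(in)}$ by an explicit counterexample.

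For $L^{(out)}$, I would first make the substitution $y_i=f(x_i)$. Since $f$ is positive, each $y_i>0$, so $y_i^{\rho}$ makes sense for every real $\rho$ and
\[
L^{(out)}_{\rho+1}(\boldsymbol{x})=\frac{\sum_{i=1}^m x_i\,y_i^{\rho}}{\sum_{i=1}^m y_i^{\rho}}
\]
is just a Lehmer-type mean of the values $x_i$ carrying the weights $y_i^{\rho}$. The crucial structural fact is that, $f$ being monotone increasing, the tuples $(x_i)$ and $(y_i)$ are comonotone: $y_i>y_j\Rightarrow x_i>x_j$, while the ties $y_i=y_j$ coming from a non-strictly increasing $f$ turn out to be harmless below. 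Then for $\rho\le\rho'$ I would clear the (positive) denominator and reduce the claim $L^{(out)}_{\rho+1}\le L^{(out)}_{\rho'+1}$ to
\begin{align*}
&\Big(\sum_i x_i y_i^{\rho'}\Big)\Big(\sum_j y_j^{\rho}\Big)-\Big(\sum_i x_i y_i^{\rho}\Big)\Big(\sum_j y_j^{\rho'}\Big)\\
&\qquad=\ \tfrac12\sum_{i,j}(x_i-x_j)\,y_i^{\rho}y_j^{\rho}\big(y_i^{\rho'-\rho}-y_j^{\rho'-\rho}\big)\ \ge\ 0,
\end{align*}
where the equality is the usual symmetrization in $i\leftrightarrow j$. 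Because $\rho'-\rho\ge 0$, the map $t\mapsto t^{\rho'-\rho}$ is non-decreasing on $(0,\infty)$, so $y_i^{\rho'-\rho}-y_j^{\rho'-\rho}$ has the sign of $y_i-y_j$, which by comonotonicity has the sign of $x_i-x_j$; hence every summand is $\ge 0$. Equivalently, differentiating in $\rho$ gives $\partial_\rho L^{(out)}_{\rho+1}=\mathrm{Cov}(x,\ln y)$ under the probability weights $y_i^{\rho}/\sum_j y_j^{\rho}$, a covariance of two increasing functions of $y$ and hence nonnegative. With $f$ the identity this is exactly the classical Lehmer monotonicity invoked in the proof of Theorem~\ref{thm:usdca}.

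For the negative part I would display a minimal counterexample and explain why the argument above cannot be salvaged. In $L^{(in)}$ the weight on $x_i$ is $f(x_i^{\rho})$, and $\rho\mapsto f(x_i^{\rho})$ is not a log-linear (exponential-family) tilt; moreover $\rho\mapsto x_i^{\rho}$ is itself non-monotone across the sample as soon as some $x_i<1<x_j$, so values and weights are no longer comonotone for all $\rho$ and the covariance/rearrangement mechanism fails. Concretely, with $m=2$, the positive increasing function $f(t)=1+t$ (a bounded variant such as $f(t)=1+\tanh t$ works identically), and $(x_1,x_2)=(1/10,1/2)$, evaluating $L^{(in)}_{\rho+1}$ at $\rho=0,1,2$ yields values that first increase and then decrease; hence $L^{(in)}$ is not monotone increasing in $\rho$, which is all that is claimed.

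I expect the main obstacle to be the negative half, not its verification, which is a two-line computation, but choosing a counterexample that simultaneously satisfies the hypotheses ($f$ positive and increasing), is as simple as possible, and makes transparent which feature of the $L^{(out)}$ proof breaks, namely that $y_i^{\rho}$ is an exponential-family reweighting that keeps values and weights comonotone for every $\rho$, whereas $f(x_i^{\rho})$ is not. The $L^{(out)}$ half is essentially routine once the substitution $y_i=f(x_i)$ and the symmetrization are in place; the only points requiring care are permitting $\rho$ to range over the whole real line (so that $\rho=0$ and $\rho=-1$, i.e.\ DCA and RI, are covered) and the ties created by a merely non-strictly increasing $f$.
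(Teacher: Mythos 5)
Your proposal is correct, and it departs from the paper's proof in both halves in ways worth noting. For $L^{(out)}$ the paper differentiates with respect to $\rho$, splits the resulting double sum over $i>j$, $i=j$ and $i<j$, and lands on the numerator $\sum_{i>j} f(x_i)^\rho f(x_j)^\rho (x_i-x_j)\big(\log f(x_i)-\log f(x_j)\big)\ge 0$ --- which is exactly your covariance remark, $\partial_\rho L^{(out)}_{\rho+1}=\mathrm{Cov}\big(x,\log f(x)\big)\ge 0$ under the tilted weights. Your finite-difference symmetrization $\frac{1}{2}\sum_{i,j}(x_i-x_j)\,y_i^{\rho}y_j^{\rho}\big(y_i^{\rho'-\rho}-y_j^{\rho'-\rho}\big)\ge 0$ is the discrete analogue of the same pairing; it is slightly more self-contained, since it compares $\rho$ and $\rho'$ directly without differentiating in $\rho$ and integrating back, and it disposes of ties from a merely non-strict $f$ for free. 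For $L^{(in)}$ the difference is more substantial and works in your favor: the paper only computes $\partial_\rho L^{(in)}_{\rho+1}$ and observes that its sign hinges on $x\mapsto x^\rho\log x$ being increasing, which fails for $x<e^{-1/\rho}$; that shows the \emph{method} breaks down but never exhibits an actual failure of monotonicity (and the paper's factorization step there silently treats $f(x_j^\rho)f'(x_i^\rho)$ and $f(x_i^\rho)f'(x_j^\rho)$ as interchangeable, which they are not in general). Your explicit counterexample --- $m=2$, $f(t)=1+t$, $(x_1,x_2)=(1/10,1/2)$, giving $L^{(in)}_{\rho+1}\approx 0.300,\ 0.331,\ 0.321$ at $\rho=0,1,2$ --- establishes the negative claim rigorously; the numbers check out. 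The only thing missing from a final write-up is the one-line verification of those three values.
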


We show in App.~\ref{app:newmeans} that a similar Theorem holds for what we call the \textit{quasi-Lehmer moments} and the \textit{quasi-Lehmer expectations}. We use this Theorem to prove in App.~\ref{app:outsuperiority}, that  the  \textit{out} version of the \mbox{$(f)\rho$-SmartDCA} improves with $\rho$:

\begin{restatable}[The higher the $\rho$, the better the $(f)\rho$-SmartDCA$^{(out)}$]{theorem}{bsdca}
\label{thm:bsdca} Investing through the  \mbox{$(f)\rho$-SmartDCA$^{(out)}$} results in better price per unit over $m$-buying events, if we increase $\rho$.
\end{restatable}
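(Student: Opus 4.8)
The plan is to run the same argument that proved Theorem~\ref{thm:usdca}, but with the Lehmer mean replaced by the \emph{out} quasi-Lehmer mean whose monotonicity is granted by Theorem~\ref{thm:qlm}. First I would set up the bookkeeping. Under the \mbox{$(f)\rho$-SmartDCA$^{(out)}$}, at the $i$-th buying event one invests $c_b f(r_i)^\rho$, where $r_i = p_r/p_i$. Hence the total amount spent is $c = c_b\sum_{i=1}^m f(r_i)^\rho$, and, using $1/p_i = r_i/p_r$, the total quantity of the asset acquired is $q = \sum_{i=1}^m \frac{c_b f(r_i)^\rho}{p_i} = \frac{c_b}{p_r}\sum_{i=1}^m r_i f(r_i)^\rho$.

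Next I would form the price per unit $\mu_\rho = c/q$; after the $c_b$ cancels and $p_r$ is pulled out, this is exactly $p_r$ divided by the \emph{out} quasi-Lehmer mean of the vector $\boldsymbol r = p_r(\boldsymbol{1/p})$, namely $\mu_\rho = p_r\,\frac{\sum_{i=1}^m f(r_i)^\rho}{\sum_{i=1}^m r_i f(r_i)^\rho} = \frac{p_r}{L^{(out)}_{\rho+1}(\boldsymbol r)}$, matching the definition of $L^{(out)}$ with index $\rho+1$. Then I would invoke Theorem~\ref{thm:qlm}: the ratios $r_i$ are strictly positive and $f$ is positive and monotonic increasing, so $\rho \le \rho'$ gives $\rho+1 \le \rho'+1$ and therefore $L^{(out)}_{\rho+1}(\boldsymbol r) \le L^{(out)}_{\rho'+1}(\boldsymbol r)$. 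Since both means are positive, taking reciprocals reverses the inequality and multiplying by $p_r>0$ yields $\mu_\rho \ge \mu_{\rho'}$, i.e.\ a larger exponent produces a weakly lower price per unit, which is the claim.

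I do not expect a genuine obstacle here, since the analytic heavy lifting has been moved into Theorem~\ref{thm:qlm}, which I am allowed to assume. The points that need care are purely organizational: keeping straight the index shift between the strategy parameter $\rho$ and the mean parameter $\rho+1$; checking positivity of $r_i$ and of each $f(r_i)^\rho$ so that the denominators are nonzero and the reciprocal step is legitimate; and verifying that the algebraic simplification of $c/q$ genuinely lands on the \emph{out} form $L^{(out)}_{\rho+1}$ and not on the \emph{in} form $L^{(in)}_{\rho+1}$ — the latter being precisely the one that, by Theorem~\ref{thm:qlm}, need not be monotonic, so the distinction is what makes the \emph{out} construction the ``right'' bounded variant to state this theorem for.
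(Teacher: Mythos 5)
Your proposal is correct and follows essentially the same route as the paper's own proof in Appendix~\ref{app:outsuperiority}: write $\mu_\rho$ as $p_r/L^{(out)}_{\rho+1}(\boldsymbol{r})$ with $\boldsymbol{r}=p_r(\boldsymbol{1/p})$ and invoke the monotonicity of the \emph{out} quasi-Lehmer mean from Theorem~\ref{thm:qlm}. If anything, you are slightly more careful than the paper, which silently drops the constant factor $p_r$ when passing from $c/q$ to $1/L^{(out)}_{\rho+1}(\boldsymbol{r})$ (harmless for the monotonicity conclusion).
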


Given that we recover the DCA strategy as we set \mbox{$\rho=0$}, the last Theorem also implies that \mbox{$(f)\rho$-SmartDCA$^{(out)}$} outperforms DCA. 
If $f$ is chosen to be bounded, it does so without incurring into the risk of exorbitant investments that could be suggested by the unbounded \mbox{$\rho$-SmartDCA}. 
Given that we proved above that DCA outperforms RI, it follows that 
 $(f)\rho$-SmartDCA$^{(out)}$ also outperforms RI.
\section{Numerical Analysis}

\subsection{$(f)\rho$-SmartDCA outperforms DCA for any $\rho$ experimentally}

We show in Fig.~\ref{fig:rhos} the effect that $\rho$ has on $\mu$, the price per unit  of the asset, for $\rho$-SmartDCA, \mbox{$(f)\rho$-SmartDCA$^{(in)}$} and \mbox{$(f)\rho$-SmartDCA$^{(out)}$}. We can see that all of them outperform the DCA, in the sense that all have lower price per unit $\mu$. The price is simulated as samples from a uniform distribution between zero and two. This stresses that our strategies outperform the DCA even in the lack of market trends. The unbounded \mbox{$\rho$-SmartDCA} achieves the lowest price per unit, but it results in absurd investments required when the prices are very low, as can be seen in the lower panels. \mbox{$(f)\rho$-SmartDCA$^{(out)}$} tends to achieve better $\mu$ than \mbox{$(f)\rho$-SmartDCA$^{(in)}$}, with the added advantage of being always provably better than DCA. The three columns in the plot correspond to three \mbox{different $f$}: $tanh$, $sigmoid$ and what we call the \mbox{$sin$-$1$}, a function that goes from zero to one as a $sin$, and then stays at one.

\begin{figure}
  \begin{minipage}[c]{0.48\textwidth}
  \includegraphics[width=1.\textwidth]{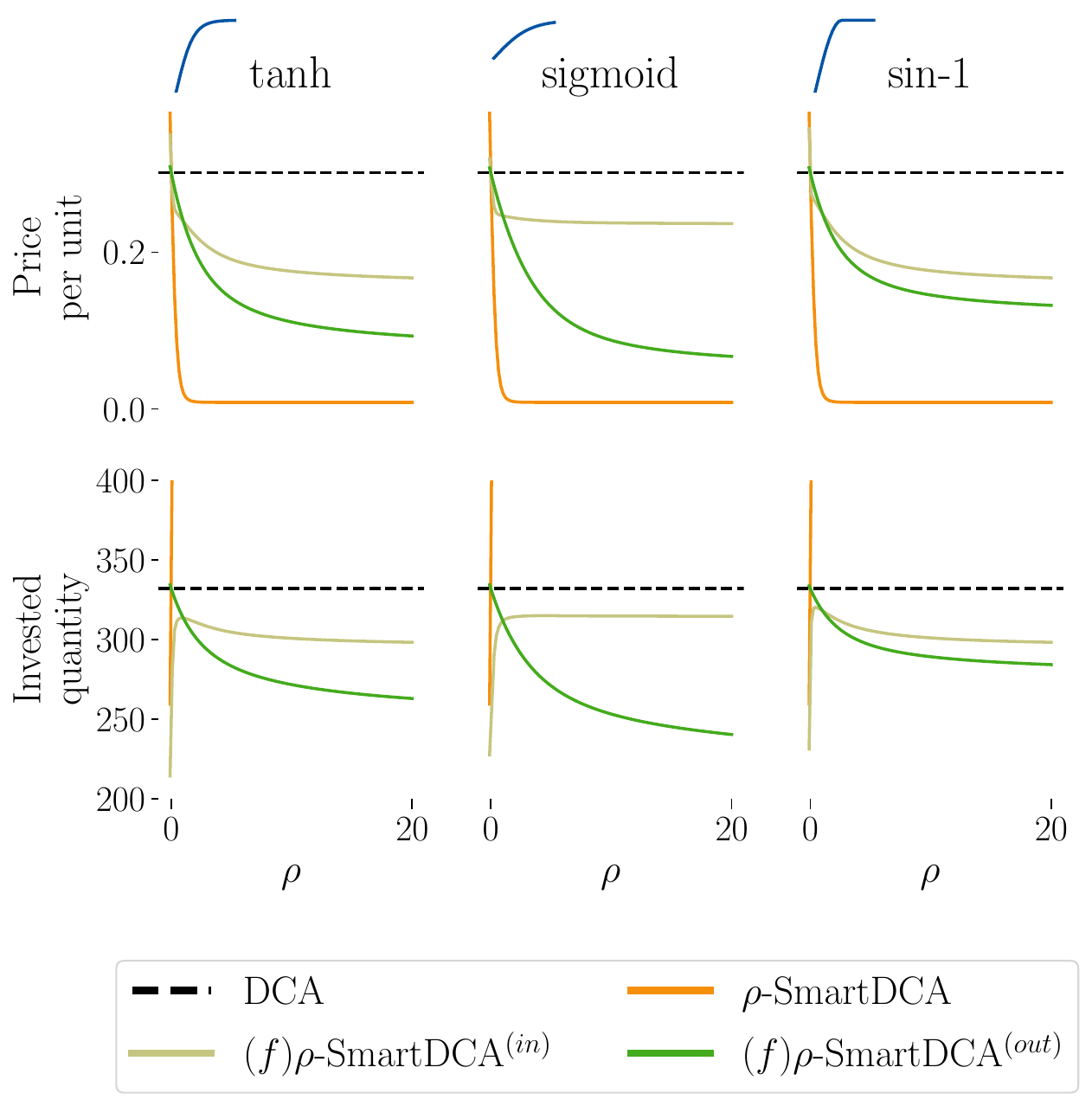}
  \end{minipage}
  \hfill
  \begin{minipage}[c]{0.48\textwidth}
    \caption{\textbf{SmartDCA outperforms DCA for any $\rho\geq0$ choice.} We simulate the behaviour of an investor that puts money regularly into an asset. The prices of the asset at buy time are 100 samples from a uniform distribution from zero to two. We plot the price per unit and invested quantity for three choices of $f$. All SmartDCA variants manage to buy at a lower price than the DCA, for any choice of $\rho$. However, unbounded SmartDCA (orange) can suggest exorbitant amounts to invest if the price is low enough, as seen in the lower panels. Only $(f)\rho$-SmartDCA$^{(out)}$ (green) is provably and unconditionally better than DCA, without investment amounts that blow up with low prices, as we prove mathematically and confirm through the plots. Even if the invested quantities appear lower for $(f)\rho$-SmartDCA$^{(out)}$ and \mbox{$(f)\rho$-SmartDCA$^{(in)}$} than for DCA, they can be matched with a higher $c_b$, where $c_b=p_r=1$ in the plot.}
    \label{fig:rhos}
  \end{minipage}
\end{figure}

\begin{figure}
    \includegraphics[width=.9\textwidth]{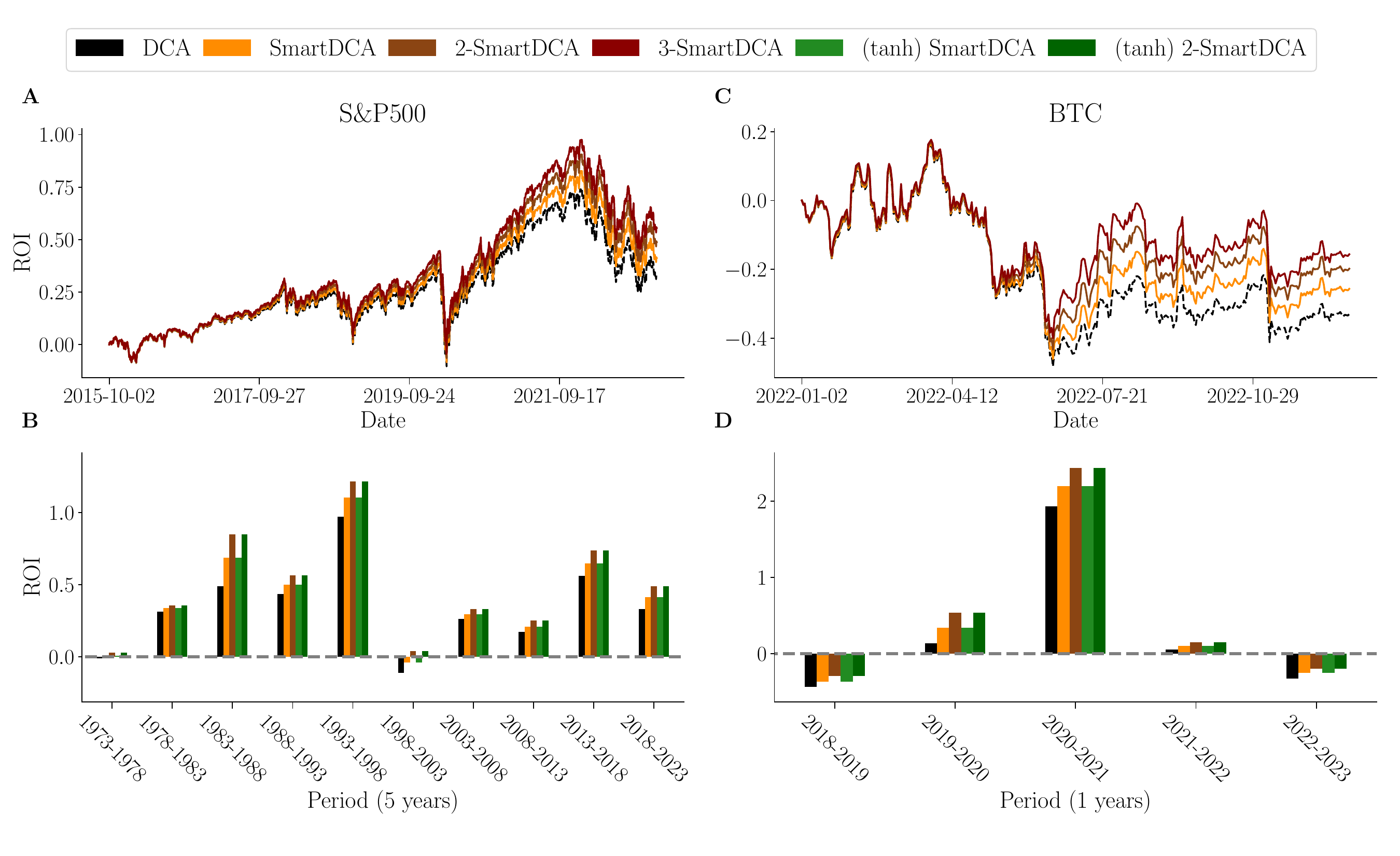}
    \caption{\textbf{ROI on S\&P500 and Bitcoin: SmartDCA outperforms DCA.} We simulate a buyer using DCA, $\rho$-SmartDCA with $\rho\in\{1,2,3\}$, and $(f)\rho$-SmartDCA$^{(out)}$ with $\rho\in\{1, 2\}$ and $f=tanh$. All SmartDCA variants achieve higher Return on Investment (ROI) than the DCA, for any $\rho$ choice, on both the S\&P500 and Bitcoin. Notice that the $(f)\rho$-SmartDCA$^{(out)}$ variants are not plotted in the upper panels because they were completely overlapping with their unbounded version. This is the case since the high prices of both studied assets  would activate the linear part of the $tanh$, resulting in very similar results as the unbounded  $\rho$-SmartDCA. Lower panels show that the same improvement over DCA can be seen in every single period of five years for S\&P500, and every single period of one year for Bitcoin.}
    \label{fig:sp500bitcoin}
\end{figure}

% \FloatBarrier
\begin{table*}[h!]
\begin{tabular}{lrrrr} \toprule
Strategy & $\mu \ (\$/BTC)$ & $q_{tot} \ (BTC)$ & $c_{tot} \ (\$)$ &  ROI \\ \midrule
DCA & \cellcolor{orange!25} 10942.8 & 0.166 & 1827 & \cellcolor{orange!25} 1.518 \\
(sig+) SmartDCA & \cellcolor{GoodGreenLight} 8893.6 & 0.058  \cellcolor{GoodGreenLight} & \cellcolor{GoodGreenLight} 516.3 & \cellcolor{GoodGreenLight} 1.868 \\
(tanh) SmartDCA &  7134.8 & $2.340\cdot10^{-5}$ & \cellcolor{red!25} 0.166 &  2.328 \\
3-SmartDCA & \cellcolor{GoodGreen} 4790.5 & 0.873 & \cellcolor{red!25} 4180.9 & \cellcolor{GoodGreen}3.46 
\end{tabular}
\caption{\textbf{Finetuning is required to buy substantial quantities. } If unchecked, the SmartDCA variants can end up buying only small quantities of the asset. Here we show that the DCA buys at worse price per unit than $(tanh)$ SmartDCA but the latter buys only a very small total amount of Bitcoin ($0.166\$$), from 2017 to 2023. Instead the $3$-SmartDCA, will suggest to invest an amount of capital potentially much higher than foreseen ($4180\$$), despite an excellent ROI. Adjusting the slope and the center of a sigmoid yearly with data of the maximal and minimal price of the previous year, allows to keep a better price per unit and ROI than DCA, while maintaining the total quantity bought to $516.3\$$, below our chosen maximum of $1827\$$, that corresponds to one dollar per day.}\label{tab:sigplus}
\end{table*}
% \FloatBarrier

\subsection{Improvements on S\&P500 and Bitcoin Investments}

In this section, we  backtest this family of strategies using real-life case scenarios. 
Since investment strategies are of interest on an overall up-trend, we test them on assets with long-term appreciating values. 
We are going to use the stock market S\&P500, an Exchange Traded Fund which measures the market capitalization of the United States 500 largest corporations. This asset class is particularly attractive for DCA investors because its estimated annualized total return is around $9\%$ (from January 1996 to June 2022) \cite{Gupta2022}. The other asset we use is Bitcoin (BTC) \cite{Nakamoto2008}, a digital crypto-currency based on a decentralized peer-to-peer electronic cash system. Bitcoin has grown in popularity over the last few years and has seen its price skyrocket, with an average annual return of around $80\%$ \cite{lazy2023}. 
% Selling at around $5\$$ in 2011, it reached an all-time high of around $65k\$$ in November 2021, to approximately $30k\$$ in July 2023. 

For the backtest, we fix the price of reference to the first price obtained in the time series $p_r=p_1$, and we test $\rho=\{1,2,3\}$ along with the function $tanh^{(out)}$. To evaluate the performance of these strategies, we measure the Return on Investment (ROI), computed as the net gains divided by the costs. The simulations are performed with \textit{kiwano-portfolio} \cite{portfolio2022}, using the setting \textit{fast\_backtesting}. \textit{kiwano-portfolio} is an open-source trading software created by the authors. As it can be seen in Fig.~\ref{fig:sp500bitcoin}, all \mbox{$\rho$-SmartDCA} and $(f)\rho$-SmartDCA variants outperform the DCA ROI, as we expected given our mathematical proofs. We show in the upper panels how the distance with DCA compounds over time. We also show in the bottom panels that the improvement over DCA can be seen in all five-year periods considered for the S\&P500 and all one-year periods for Bitcoin. One can note that even for periods of loss, the SmartDCA strategies still manage to lose less than the DCA.

\subsection{Adapting $f$ on past data}

% We already saw that an unbounded $\rho$-SmartDCA, could result in high investment costs
Note that if the shape of $f$ and the reference price $p_r$ are not chosen carefully, the final quantity bought can be very low, even if it was bought at an excellent price per unit. To address this issue, we propose adapting the sensitive part of a sigmoid curve to the maximal and minimal prices of the previous year:
\begin{equation}
f(x) = \text{sigmoid}((x-x_0)/\lambda)
\end{equation}
\noindent with $x_0=(y_{max} + y_{min}) / 2$ and $\lambda=(y_{max} - y_{min})/8$. We define $y_{max} = \max_i 1/p_i$ and $y_{min} = \min_i 1/p_i$ over the prices of the previous year. We refer to the resulting strategy as the \mbox{$(sig+)$ SmartDCA}. 
As you can see in Table~\ref{tab:sigplus}, assuming a base cost investment of $c_b=1\$$ in Bitcoin each day,
\mbox{3-SmartDCA} achieves the best ROI and $\mu$ (price per unit), but it comes with an investment over twice our base cost. On the other hand, the bounded \mbox{$(tanh)$ SmartDCA} achieves the second best ROI and  $\mu$, but buys a negligible quantity of Bitcoin. By adapting the shape of the $sigmoid$, we manage to maintain a better ROI and $\mu$ than DCA, and purchase an amount significantly closer to our desired base cost. Finally, one can observe that multiplying by three the base amount of dollars invested per day with \mbox{$(sig+)$ SmartDCA}, would roughly result in the same final quantity of asset obtained with DCA, while keeping a lower $\mu$ and higher ROI.

\section{Discussion and Conclusion}
We showed that the DCA, and Regular Investing, are elements of a broader category of strategies that we called \mbox{$(f)\rho$-SmartDCA}, with $f$ a positive monotone increasing function, and $\rho$ the exponent applied to a modulator for a reference price $p_r$, and a base cost $c_b$, such that the suggested investment at time $i$ is $c_b f(p_r/{p_i})^\rho$. For each of these strategies, we computed the average price per unit $\mu_\rho$ and were able to demonstrate mathematically that they follow a decreasing order with $\rho$:
\begin{align}
    \mu_0 \ge \mu_1 \ge \mu_2 \ge ... \ge \mu_\rho    
\end{align}
As such, we proved that the DCA corresponds to a $0$-SmartDCA, outperformed by all $(f)\rho$-SmartDCA for \mbox{$\rho>0$}. Notice 
that the buying events could be placed randomly in time and $\rho$-SmartDCA would still outperform the DCA, since a regular time assumption was never used in the Theorems. The regularity in the investments is to overtake human psychology and the tendency to go into investments when they are popular and therefore, likely to tip.

Moreover, we introduced the \textit{quasi-Lehmer means} and its generalizations to be able to prove that a wide family of \mbox{$(f)\rho$-SmartDCA} mathematically outperforms the DCA. Finally, we empirically confirmed our theoretical findings on random data and on the S\&P500 and Bitcoin historical data: \mbox{$(f)\rho$-SmartDCA} is superior to DCA.

To finish, in Appendix~\ref{app:newmeans}, we were able to further generalize our proof by the use of quasi-Gini means for Thm.~\ref{thm:qlmm} and  Thm.~\ref{thm:qlem}, and in future work one could potentially design even more universal investment strategies based on these new theorems. 

\bibliographystyle{unsrt} 
\bibliography{sections/references}

\newpage
\clearpage

\onecolumn

% \clearpage
\renewcommand{\thesection}{\Alph{section}}
\renewcommand{\theHsection}{A\arabic{section}}

% \begin{appendices}
\beginsupplement

\section*{Appendix}

\section{SmartDCA superiority over DCA over m-buying events}
\label{app:smd}

Notice that Theorems~\ref{thm:sdca} and \ref{thm:usdca} are essentially special cases of Theorem~\ref{thm:bsdca} for $f$ the identity function, and only the proof of Theorem~\ref{thm:bsdca} would therefore be necessary. However, we show here a simpler proof for Theorem~\ref{thm:sdca}.

\sdca*
\begin{proof}
Using the SmartDCA, the quantity $q$ of asset we are going to buy is:

\begin{align}
    q &=\frac{c_b}{p_1}\frac{p_r}{p_1} + \frac{c_b}{p_2}\frac{p_r}{p_2}   + \cdots   + \frac{c_b}{p_m}\frac{p_r}{p_m} \\
    &=c_bp_r\Big(\sum_{i=1}^m\frac{1}{p_i^2}\Big)
\end{align}

On the other hand, the cost of these transactions is:

\begin{align}
    c &= c_b\Big(\frac{p_r}{p_1} + \frac{p_r}{p_2}    + \cdots   + \frac{p_r}{p_m} \Big) \\
    &=c_bp_r\Big(\sum_{i=1}^m\frac{1}{p_i}\Big)
\end{align}

This results in the following average price:

\begin{align}
    \frac{c}{q}
    &= \frac{c_bp_r\sum_{i=1}^m \frac{1}{p_i} }{c_bp_r\sum_{i=1}^m \frac{1}{p_i^2}} \\
    &= \frac{\sum_{i=1}^m \frac{1}{p_i} }{\sum_{i=1}^m \frac{1}{p_i^2}}\label{eq:smart_dca}
\end{align}

Now we need the equivalent quantity in the case where the investor used the standard DCA strategy.
In the DCA case, we have the following:

\begin{align}
    c &= mc_b \\
    q &= c_b\sum_{i=1}^m\frac{1}{p_i} \\
    \frac{c}{q} &= \frac{m}{\sum_{i=1}^m\frac{1}{p_i}}
\end{align}

To establish the superiority of the SmartDCA over the DCA we have to prove the following inequality:

\begin{align}
    \frac{m}{\sum_{i=1}^m\frac{1}{p_i}} &\geq\frac{\sum_{i=1}^m \frac{1}{p_i} }{\sum_{i=1}^m \frac{1}{p_i^2}}\\
    m\sum_{i=1}^m \frac{1}{p_i^2} &\geq\sum_{o=1}^m\frac{1}{p_o}\sum_{i=1}^m \frac{1}{p_i} \\
    & = \Big(\sum_{i=1}^m \frac{1}{p_i}\Big)^2
\end{align}

\noindent where in the second line we rearranged the factors to make the proof easier. Now we start from the left-hand side of the inequality, and we use the Cauchy-Schwarz (CS) inequality to prove that the inequality is actually true:

\begin{align}
    m\sum_{i=1}^m \frac{1}{p_i^2} &= \Big(\sum_{i=1}^m 1^2\Big)\sum_{i=1}^m \frac{1}{p_i^2} \\
    &\geq \Big(\sum_{i=1}^m 1\cdot \frac{1}{p_i}\Big)^2 && \text{CS}\\
    &= \Big(\sum_{i=1}^m \frac{1}{p_i}\Big)^2
\end{align}

\noindent which is exactly what we wanted to prove. QED

\end{proof}

\newpage
\clearpage
\section{Quasi-Lehmer means}
\label{app:newmeans}

Let's define two quasi-Lehmer means:

\begin{align}        
    L^{(out)}_{\rho+1}(\boldsymbol{x})=& \ \frac{\sum_{i=1}^m x_if(x_i)^\rho }{\sum_{i=1}^m f(x_i)^\rho}, \qquad L^{(in)}_{\rho+1}(\boldsymbol{x})=\frac{\sum_{i=1}^m x_if(x_i^\rho) }{\sum_{i=1}^m f(x_i^\rho)}
\end{align}

\noindent as two generalizations of the  Lehmer mean \cite{bullen2003handbook, bullen2013means}:

\begin{align}
    L_\rho(\boldsymbol{x})= \frac{\sum_{i=1}^m x_i^{\rho} }{\sum_{i=1}^m x_i^{\rho-1}}
\end{align}

\noindent that was used to prove Theorem~\ref{thm:usdca}. Now we want to understand if they are monotonic increasing with $\rho$.

\qlm*

\begin{proof}

We proceed by showing that their derivative with respect to $\rho$ is always positive given the Theorem assumptions, to determine that they are monotonic increasing with $\rho$. After taking the derivative and factorizing, in $(\star)$ we split the summation into terms that are $i>j$, $i=j$, and $i<j$, notice that they are zero for  $i=j$, and change the notation from $i,j\rightarrow j,i$, when $i<j$:

\begin{align}
    L^{(out)}_{\rho+1}(\boldsymbol{x})=& \ \frac{\sum_{i=1}^m x_if(x_i)^\rho }{\sum_{i=1}^m f(x_i)^\rho}\\
    \frac{\partial L^{(out)}_{\rho+1}(\boldsymbol{x})}{\partial \rho} =& \ \frac{\splitdfrac{\sum_{j=1}^m f(x_j)^\rho\cdot \sum_{i=1}^mx_if(x_i)^\rho\log f(x_i)}{-\sum_{j=1}^m f(x_j)^\rho\log f(x_j)\cdot \sum_{i=1}^mx_if(x_i)^\rho}}{(\sum_{j=1}^m f(x_j)^\rho)^2} \\
    =& \ \frac{\sum_{i,j} x_if(x_i)^\rho f(x_j)^\rho \Big(\log f(x_i)-\log f(x_j)\Big) }{(\sum_{j=1}^m f(x_j)^\rho)^2}\\
    =& \ \frac{\splitdfrac{\sum_{i>j} x_if(x_i)^\rho f(x_j)^\rho \Big(\log f(x_i)-\log f(x_j)\Big)}{ + x_jf(x_j)^\rho f(x_i)^\rho \Big(\log f(x_j)-\log f(x_i)\Big)}}{(\sum_{j=1}^m f(x_j)^\rho)^2} && \star \\
    =& \ \frac{\sum_{i>j} f(x_i)^\rho f(x_j)^\rho (x_i-x_j)\Big(\log f(x_i)-\log f(x_j)\Big)}{(\sum_{j=1}^m f(x_j)^\rho)^2}\geq 0\label{eq:loutderivative}
\end{align}

Since $f(\cdot)$ and $\log(\cdot)$ are monotonic increasing, then $\log f(\cdot)$ is monotonic increasing. This means that if $x_i>x_j$ then $\log f(x_i)>\log f(x_i)$, by definition of monotonic increasing. Therefore if $x_i - x_j>0$ then $\log f(x_i)-\log f(x_j)>0$ and we have $(x_i - x_j)(\log f(x_i)-\log f(x_j))>0$. In the opposite case, when $x_i<x_j$, by the monotonic increasing property we have $\log f(x_i)<\log f(x_j)$, which we can rewrite as 
$x_i - x_j<0$ implies \mbox{$\log f(x_i)-\log f(x_j)<0$}, and therefore the multiplication of two negative numbers is positive. This proves that all the summands in Eq.~(\ref{eq:loutderivative}) are positive; therefore, the sum is positive.

As a consequence, $L^{(out)}_{\rho+1}(\boldsymbol{x})$ will be monotonic increasing as long as $f$ is positive and monotonic increasing.
However, if we follow the same steps for $L^{(in)}_{\rho+1}(\boldsymbol{x})$, we get:

\begin{align}
    L^{(in)}_{\rho+1}(\boldsymbol{x})=& \ \frac{\sum_{i=1}^m x_if(x_i^\rho) }{\sum_{i=1}^m f(x_i^\rho)} \\
    \frac{\partial L^{(in)}_{\rho+1}(\boldsymbol{x})}{\partial \rho} =& \ \frac{\splitdfrac{\sum_{j=1}^m f(x_j^\rho)\cdot \sum_{i=1}^mx_if'(x_i^\rho)x_i^\rho\log x_i}{-\sum_{j=1}^m f'(x_j^\rho)x_j^\rho\log x_j\sum_{i=1}^mx_if(x_i^\rho)}}{(\sum_{j=1}^m f(x_j^\rho))^2}\\
    =& \ \frac{\sum_{i,j} \Big[f(x_j^\rho)f'(x_i^\rho)x_ix_i^\rho\log x_i-f(x_i^\rho)f'(x_j^\rho)x_ix_j^\rho\log x_j\Big]}{(\sum_{j=1}^m f(x_j^\rho))^2}\\
    =& \ \frac{\sum_{i>j} \Big[\splitdfrac{f(x_j^\rho)f'(x_i^\rho)x_ix_i^\rho\log x_i- f(x_i^\rho)f'(x_j^\rho)x_ix_j^\rho\log x_j}{ +  f(x_i^\rho)f'(x_j^\rho)x_jx_j^\rho\log x_j- f(x_j^\rho)f'(x_i^\rho)x_jx_i^\rho\log x_i}\Big]}{(\sum_{j=1}^m f(x_j^\rho))^2}\\
    =& \ \frac{\sum_{i>j} \Big[\splitdfrac{f(x_j^\rho)f'(x_i^\rho)x_i^\rho\log x_i(x_i-x_j)}{-   f(x_i^\rho)f'(x_j^\rho)x_j^\rho\log x_j(x_i-x_j)}\Big]}{(\sum_{j=1}^m f(x_j^\rho))^2}\\
    =& \ \frac{\sum_{i>j} f(x_j^\rho)f'(x_i^\rho)(x_i^\rho\log x_i-x_j^\rho\log x_j)(x_i-x_j)}{(\sum_{j=1}^m f(x_j^\rho))^2} \\
    =& \ \frac{\sum_{i>j} f(x_j^\rho)\partial f(y)/\partial y \ \rho x_i^{\rho-1}(x_i^\rho\log x_i-x_j^\rho\log x_j)(x_i-x_j)}{(\sum_{j=1}^m f(x_j^\rho))^2} 
\end{align}

\noindent Even when assuming $\rho\geq0$ and $f$ positive and monotonic increasing, it will only be positive if $x^\rho\log x$ is monotonic increasing with $x$. However, we show in the following that it is not generally the case. In fact:
\begin{align}\frac{\partial}{\partial x} x^\rho \log x =& \ \rho x^{\rho-1}\log x + x^{\rho-1} \\
=& \ x^{\rho-1}(\rho \log x + 1)
\end{align}
\noindent which is positive only if $\rho \log x + 1\geq0$ and therefore only for $x \geq \ e^{-\frac{1}{\rho}}$. In other words, $L^{(in)}_{\rho+1}(\boldsymbol{x})$ is  monotonic increasing with $\rho$, if $\rho\geq\max\{-1/\log x,0 \}$ or $\rho\leq\min\{-1/\log x,0 \}$, so it's not in general monotonic increasing with $\rho$. QED.
\end{proof}

For the sake of completeness, we also define the correspondent quasi-Gini means for $\rho+1\neq\gamma$ as:

\begin{align}        
    G^{(out)}_{\rho+1, \gamma}(\boldsymbol{x})=& \ \left( \frac{\sum_{i=1}^m x_if(x_i)^\rho }{\sum_{i=1}^m f(x_i)^\gamma}\right)^{\frac{1}{\rho+1-\gamma}}, \qquad G^{(in)}_{\rho+1, \gamma}(\boldsymbol{x})=\left(\frac{\sum_{i=1}^m x_if(x_i^\rho) }{\sum_{i=1}^m f(x_i^\gamma)}\right)^{\frac{1}{\rho+1-\gamma}}
\end{align}

\noindent that become the quasi-Lehmer means for $\rho=\gamma$.
Note that an analogue to Theorem~\ref{thm:qlm} is also valid for higher quasi-Lehmer moments, defined as:
\begin{align}        
    L^{(out)}_{\rho+1, \xi}(\boldsymbol{x})=& \ \frac{\sum_{i=1}^m x_i^\xi f(x_i)^\rho }{\sum_{i=1}^m f(x_i)^\rho}
\end{align}

\begin{restatable}[quasi-Lehmer moments monotonicity]{theorem}{qlmm}
\label{thm:qlmm} If $\rho\leq \rho'$, $\xi\geq1$ and $f$ is positive and monotonic increasing then $L^{(out)}_{\rho, \xi}(\boldsymbol{x})\leq L^{(out)}_{\rho', \xi}(\boldsymbol{x})$, and therefore $L^{(out)}_{\rho, \xi}(\boldsymbol{x})$ is monotonic increasing with $\rho$. 
\end{restatable}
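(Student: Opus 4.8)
The plan is to follow verbatim the strategy used for the $L^{(out)}$ half of Theorem~\ref{thm:qlm}: show that $\partial L^{(out)}_{\rho+1,\xi}(\boldsymbol{x})/\partial\rho\geq 0$ for every $\rho$, which immediately gives monotonicity in the index. Writing $g_i=f(x_i)>0$, we have $\partial_\rho g_i^\rho=g_i^\rho\log g_i$, so the quotient rule applied to $L^{(out)}_{\rho+1,\xi}(\boldsymbol{x})=\big(\sum_i x_i^\xi g_i^\rho\big)\big/\big(\sum_i g_i^\rho\big)$ produces a derivative whose numerator, after clearing the common denominator $(\sum_j g_j^\rho)^2$, equals $\sum_{i,j} x_i^\xi g_i^\rho g_j^\rho\,(\log g_i-\log g_j)$.

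Next I would symmetrize this double sum exactly as in the step marked $(\star)$ in the proof of Theorem~\ref{thm:qlm}: the diagonal terms $i=j$ vanish, and pairing the $(i,j)$ term with the $(j,i)$ term for $i>j$ collapses the numerator to
\[
\sum_{i>j} g_i^\rho g_j^\rho\,\big(x_i^\xi-x_j^\xi\big)\big(\log f(x_i)-\log f(x_j)\big).
\]
It then remains only to check that each summand is nonnegative. The factor $g_i^\rho g_j^\rho$ is positive because $f$ is positive. For the remaining product, on the positive reals (prices and ratios are positive) the map $x\mapsto x^\xi$ is monotonic increasing since $\xi\geq 1>0$, while $x\mapsto\log f(x)$ is monotonic increasing as a composition of monotonic increasing functions; hence $x_i-x_j$, $x_i^\xi-x_j^\xi$ and $\log f(x_i)-\log f(x_j)$ all carry the same sign, so $\big(x_i^\xi-x_j^\xi\big)\big(\log f(x_i)-\log f(x_j)\big)\geq 0$. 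Therefore the whole derivative is nonnegative, $L^{(out)}_{\rho,\xi}(\boldsymbol{x})$ is monotonic increasing in its index, and in particular $\rho\leq\rho'$ implies $L^{(out)}_{\rho,\xi}(\boldsymbol{x})\leq L^{(out)}_{\rho',\xi}(\boldsymbol{x})$.

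There is essentially no hard step: the argument is the same co-monotonicity (rearrangement-type) observation that drove Eq.~(\ref{eq:loutderivative}), with $x_i^\xi$ now playing the role previously played by $x_i$ in the case $\xi=1$. The only point requiring a word of care is the domain, namely that the $x_i$ are positive so that $x\mapsto x^\xi$ is genuinely increasing; this is the sole place where the extra hypothesis $\xi\geq 1$ (indeed any $\xi>0$ would do) is used relative to Theorem~\ref{thm:qlm}. Smoothness in $\rho$ is automatic because $g_i>0$, so differentiating is legitimate for $\rho$ of either sign, matching the absence of a sign restriction on $\rho$ in the statement.
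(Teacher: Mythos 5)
Your proof is correct and follows essentially the same route as the paper: differentiate in $\rho$, symmetrize the double sum over pairs $i>j$ to obtain the factor $\bigl(x_i^\xi-x_j^\xi\bigr)\bigl(\log f(x_i)-\log f(x_j)\bigr)$, and conclude nonnegativity from co-monotonicity, exactly as in Eq.~(\ref{eq:loutderivative}). Your side remark that any $\xi>0$ suffices (given positive $x_i$) is a small but accurate sharpening of the stated hypothesis $\xi\geq1$.
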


\begin{proof}

We proceed similarly as to prove Theorem~\ref{thm:qlm}:

\begin{align}
    L^{(out)}_{\rho+1, \xi}(\boldsymbol{x})= & \ \cdots \\
    =& \ \frac{\sum_{i>j} f(x_i)^\rho f(x_j)^\rho (x_i^\xi-x_j^\xi)\Big(\log f(x_i)-\log f(x_j)\Big)}{(\sum_{j=1}^m f(x_j)^\rho)^2}\geq 0
\end{align}

\noindent which is still monotonic increasing for positive monotonic increasing $f$ because $x_i^\xi$ is also monotonic increasing for $\xi\geq1$. QED

\end{proof}

As you can see, the same can be proven for the more general case:

\begin{align}        
    L^{(out)}_{\rho+1, g}(\boldsymbol{x})=& \ \frac{\sum_{i=1}^m g(x_i) f(x_i)^\rho }{\sum_{i=1}^m f(x_i)^\rho}
\end{align}

\begin{restatable}[quasi-Lehmer expectation monotonicity]{theorem}{qlem}
\label{thm:qlem} If $\rho\leq \rho'$,  $f$ is positive monotonic increasing, and $g$ monotonic increasing, then $L^{(out)}_{\rho, g}(\boldsymbol{x})\leq L^{(out)}_{\rho', g}(\boldsymbol{x})$, and therefore $L^{(out)}_{\rho,g}(\boldsymbol{x})$ is monotonic increasing with $\rho$.
\end{restatable}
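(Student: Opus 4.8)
The plan is to mirror the proofs of Theorems~\ref{thm:qlm} and~\ref{thm:qlmm}: differentiate $L^{(out)}_{\rho+1,g}(\boldsymbol{x})$ with respect to $\rho$, show the derivative is nonnegative under the hypotheses, and conclude monotonicity in $\rho$ (the shift $\rho\mapsto\rho+1$ in the subscript is just an order-preserving reindexing, so this yields the statement as written). Write $N(\rho)=\sum_{i=1}^m g(x_i)f(x_i)^\rho$ and $D(\rho)=\sum_{i=1}^m f(x_i)^\rho$. Since $g(x_i)$ does not depend on $\rho$ and $\partial_\rho f(x_i)^\rho=f(x_i)^\rho\log f(x_i)$ — well defined because $f$ is positive — the quotient rule gives
\[
\frac{\partial L^{(out)}_{\rho+1,g}(\boldsymbol{x})}{\partial\rho}=\frac{N'D-ND'}{D^2}=\frac{\sum_{i,j} g(x_i)\,f(x_i)^\rho f(x_j)^\rho\big(\log f(x_i)-\log f(x_j)\big)}{\big(\sum_{j=1}^m f(x_j)^\rho\big)^2}.
\]

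Next I would symmetrize the double sum exactly as in the $(\star)$ step of Theorem~\ref{thm:qlm}: split the sum over $i,j$ into the parts $i>j$, $i=j$, and $i<j$; the diagonal terms vanish; and pairing the $(i,j)$ term with the $(j,i)$ term collapses the numerator to
\[
\sum_{i>j} f(x_i)^\rho f(x_j)^\rho \big(g(x_i)-g(x_j)\big)\big(\log f(x_i)-\log f(x_j)\big).
\]
Then comes the sign argument: $f(x_i)^\rho f(x_j)^\rho>0$ since $f>0$; $\log f$ is monotonic increasing as a composition of monotonic increasing maps; and $g$ is monotonic increasing by hypothesis. Hence $g(x_i)-g(x_j)$ and $\log f(x_i)-\log f(x_j)$ always carry the same sign (both $\geq 0$ when $x_i\geq x_j$, both $\leq 0$ when $x_i\leq x_j$), so their product is nonnegative. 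Every summand is therefore nonnegative, the derivative is nonnegative, and $L^{(out)}_{\rho,g}(\boldsymbol{x})$ is monotonic increasing in $\rho$; in particular $\rho\leq\rho'$ implies $L^{(out)}_{\rho,g}(\boldsymbol{x})\leq L^{(out)}_{\rho',g}(\boldsymbol{x})$.

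The main point to get right — rather than a genuine obstacle — is noticing that, unlike the $L^{(in)}$ case analyzed in Theorem~\ref{thm:qlm}, no regularity of $g$ beyond order-preservation is required: $g$ enters the mean linearly and $\rho$-independently, so it passes through the differentiation untouched and only its monotonicity is used in the sign step. Theorem~\ref{thm:qlm} is recovered by taking $g=\mathrm{id}$, and Theorem~\ref{thm:qlmm} by taking $g(x)=x^\xi$ with $\xi\geq 1$ (which is monotonic increasing on the relevant domain), so both are special cases of this argument.
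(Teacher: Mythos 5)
Your proof is correct and follows exactly the route the paper intends: the paper's own "proof" of Theorem~\ref{thm:qlem} is just the remark that one repeats the derivative-and-symmetrization argument of Theorems~\ref{thm:qlm} and~\ref{thm:qlmm} with $x^\xi$ replaced by $g(x)$, which is precisely what you carried out, arriving at the same numerator $\sum_{i>j} f(x_i)^\rho f(x_j)^\rho\bigl(g(x_i)-g(x_j)\bigr)\bigl(\log f(x_i)-\log f(x_j)\bigr)\geq 0$. Your added observation that only the order-preservation of $g$ is used (no differentiability needed, since $g$ is $\rho$-independent) is accurate and slightly sharper than what the paper states.
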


\noindent where the proof follows the exact same steps as the two previous proofs, but e.g. replacing $x^\xi$ by $g(x)$.

\section{$(f)\rho$-SmartDCA$^{(out)}$ superiority over DCA}
\label{app:outsuperiority}

\bsdca*

\begin{proof}

We proceed as before, we start with the $\rho$-SmartDCA$^{(out)}$, at each time step, we invest an amount proportional to a base cost $c_b$, and take the ratio of the reference price $p_r$ and current price:

\begin{align}
    q  &=c_b\Big(\sum_{i=1}^m\frac{1}{p_i}f\Big(\frac{p_r}{p_i}\Big)^\rho\Big)\\
    c &=c_b\Big(\sum_{i=1}^mf\Big(\frac{p_r}{p_i}\Big)^\rho\Big)
\end{align}

Now consider the quasi-Lehmer \textit{out} mean we defined in the main text:

\begin{align}
    L^{(out)}_{\rho+1}=& \ \frac{\sum_{i=1}^m x_if(x_i)^\rho }{\sum_{i=1}^m f(x_i)^\rho}
\end{align}

We will make use of the result of our Threorem \ref{thm:qlm}, the fact that $\rho\leq\rho'\implies L^{(out)}_\rho(x)\leq L^{(out)}_{\rho'}(x)$.
Using  the notation $r_i = p_r/p_i$,
% we can rewrite Eq.~\ref{eq:smart_dca}
 if  $\rho\leq\rho'$ we can write

\begin{align}
    \mu_\rho=& \ \frac{c}{q}\\
    =& \ \frac{\sum_{i=1}^m f(\frac{p_r}{p_i})^\rho }{\sum_{i=1}^m \frac{p_r}{p_i}f(\frac{p_r}{p_i})^\rho}\\
    =& \ \frac{1}{L^{(out)}_{\rho+1}(\boldsymbol{r})}\\
    \geq& \ \frac{1}{L^{(out)}_{\rho'+1}(\boldsymbol{r})}=\mu_{\rho'}
\end{align}

\noindent which ends the proof. QED.

\end{proof}

\end{document}